\documentclass[sn-mathphys-num]{sn-jnl}
\usepackage{graphicx} 

\usepackage[most]{tcolorbox}
\usepackage{amsmath}
\usepackage{amssymb}
\usepackage{amsthm}
\usepackage{hyperref}
\newcommand{\R}{\mathbb{R}}

\newcommand{\norm}[1]{\left\| #1 \right \|}
\newcommand{\abs}[1]{\left | #1 \right |}
\newtheorem{theorem}{Theorem}
\newtheorem{lemma}[theorem]{Lemma}
\newtheorem{definition}[theorem]{Definition}
\newtheorem{remark}[theorem]{Remark}
\newtheorem{corollary}[theorem]{Corollary}

\DeclareMathOperator{\alt}{alt}
\DeclareMathOperator{\vol}{vol}
\DeclareMathOperator{\height}{height}

\newcommand{\conv}{\mathrm{conv}}
\newcommand{\scproduct}[2]{\left\langle#1, #2\right\rangle}

\newcommand{\spann}{\mathrm{span}}

\usepackage{xcolor}
\definecolor{darkgreen}{RGB}{0,100,0}
\definecolor{darkred}{RGB}{139,0,0}

\title{Generalized altitudes and their bounds
}

\author*[1]{\fnm{Hana} \sur{Dal Poz Kou\v{r}imsk{\'a}}}\email{hana.dal.poz.kourimska@uni-potsdam.de}

\author[2]{\fnm{Mathijs} \sur{Wintraecken}}\email{mathijs.wintraecken@inria.fr}

\affil*[1]{\orgdiv{Institut f{\"u}r Mathematik der Universit{\"a}t Potsdam}, 
\orgname{
University of Potsdam, Campus Golm}, \orgaddress{\city{Golm}, \postcode{D-14476}, \country{Germany}}, \url{https://orcid.org/0000-0001-7841-0091}}

\affil[2]{\orgdiv{Centre Inria d'Université Côte d'Azur}, \orgaddress{\city{Sophia-Antipolis}, \postcode{06902}, \country{France}}, \url{https://orcid.org/0000-0002-7472-2220}}

\abstract{
We introduce generalized altitudes of a simplex, extending the usual vertex-to-opposite-face altitude to arbitrary pairs of opposite faces. These quantities encode the relative position of the affine spans of such faces and yield a uniform formula for the angle between them. We also derive an equivalent algebraic expression in terms of generalized cross products and Gram determinants, linking the construction to standard determinant-based tools. Finally, we prove that every generalized altitude is bounded below by a quantity controlled by the ordinary height of the simplex. Thus, classical height or thickness assumptions imply control over this broader family of geometric quantities. The results provide a compact framework for studying simplex quality and are motivated by applications to triangulation criteria for Riemannian manifolds.
}

\begin{document}

\maketitle

\section{Introduction}
Well-shaped simplices are essential for numerical accuracy in several areas of computational mathematics, including the numerical solution of partial differential equations by finite element methods
\cite{synge1957hypercircle, babuska1976angle, babuvvska1978error, jamet1976estimations, krizek1992maximum, shewchuk2002good}
and manifold meshing
\cite{cheng2005manifold, cheng2012delaunay}.
In finite element methods, poorly shaped simplices can lead to large discretization errors and ill-conditioned stiffness matrices. In manifold meshing, simplex quality controls, among other things, the angle between a simplex whose vertices lie on the manifold and the tangent space of the manifold at, or near, those vertices
\cite{whitney1957}.

Several quality measures are used to quantify how well shaped a simplex is. The choice of measure often depends on the field, and sometimes on the specific application. Common examples include the \emph{thickness}
\cite{munkres1966elementary, vavasis1996stable}, defined as the ratio between the height of the simplex---that is, its smallest altitude---and its largest edge length; the \emph{fatness}
\cite{whitney1957}, defined as the ratio between the volume of the simplex and the largest edge length raised to the dimension of the simplex; and the \emph{inradius--circumradius ratio}
\cite{cavendish}.
These measures are weakly equivalent: upper and lower bounds on one of them yield corresponding upper and lower bounds on the others. Bounds on \emph{dihedral angles} are also frequently used as a measure of simplex quality. Thickness, in particular, has proved especially convenient in the context of manifold meshing
\cite{DelaunayTriang, Localcrits}, because it can be related directly to the eigenvalues associated with the edge vectors of the simplex.

The main goal of this paper is to extend the classical notion of altitude, which is defined from a vertex to the opposite face, to arbitrary pairs of opposite faces of a simplex (Definition \ref{def:GenAlt}). We call the resulting quantities \emph{generalized altitudes}.

We first show that generalized altitudes provide a natural way to describe the angle between (the affine hulls of) a face of a simplex and its opposite face (Lemma~\ref{lem:altitude_expression}). More precisely, they lead to an explicit formula for this angle. The formula is simple, geometrically transparent, and applies uniformly to faces of arbitrary dimension.

We then relate generalized altitudes to standard tools from multilinear algebra and analysis, in particular the (generalized) cross product and the Gram determinant (Corollary~\ref{cor:altitude_Gram_determinant}). This connection gives a second, algebraic formula for generalized altitudes, and provides a bridge between the geometry of simplices and determinant-based analytic estimates.

Finally, we prove a lower bound for generalized altitudes in terms of the height of the simplex (Lemma~\ref{CL:altitute_bounded_by_height}). This estimate shows that lower bounds on the usual height of a simplex also control the generalized altitudes, and therefore gives a direct mechanism for controlling this broader notion of simplex quality.

We believe that both the geometric formula and the algebraic connections developed here are useful tools for studying simplex quality. This usefulness is reflected in the fact that the same notion arose naturally in two projects\footnote{As of 2026, these projects are still ongoing, but should be completed by the end of the year. } concerned with improved triangulation criteria for Riemannian manifolds, where these formulas became key ingredients in the proofs.


\begin{tcolorbox}[
  colback=gray!5,
  colframe=black,
  boxrule=0.5pt,
  arc=2pt
]
In the remainder of this article, we write $\sigma$ for an $n$-dimensional simplex in the $n$-dimensional Euclidean space,
\[\sigma = \{v_0,\dots,v_n\}\subseteq \R^n.\]
We assume that $n\geq 2$. For $0\leq k\leq n$, we write $\tau$ for the $k$-dimensional face of $\sigma$ consisting of the first $k+1$ vertices, and $\tau^*$ for its opposite face,
\[\tau=\{v_0,\dots,v_k\}\subseteq\sigma,\qquad \tau^* = \sigma\backslash\tau = \{v_{k+1},\dots,v_n\}\subseteq\sigma.\]
We assume $\sigma$ to be non-degenerate.
\end{tcolorbox}

\section{Generalized altitude}

The definition of altitude we introduce here is a very natural generalization of the altitude of a vertex:
\begin{definition}\label{def:GenAlt}
The \emph{altitude} of~$\tau$ (and~$\tau^*$),
	$\mathrm{alt}(\tau,\tau^*),$
	is the smallest distance between the affine hulls of~$\tau$, and of~$\tau^*$. In other words, if we denote the affine hulls of~$\tau$ and~$\tau^*$ by~$\mathcal{F}_{\tau}$ and~$\mathcal{F}_{\tau^*}$, respectively, then
	\[\mathrm{alt}(\tau,\tau^*) = \min_{u\in\mathcal{F}_{\tau}, u^*\in\mathcal{F}_{\tau^*}} d(u,u^*). \]
\end{definition} 
This definition encompasses the common notion of the {altitude of a vertex},  which is the distance from the vertex to the affine hull of the opposing facet in a simplex.
	
	

    In general, the distance between two smooth (non-intersecting) submanifolds of Euclidean space is attained along a line segment that is normal to both of them at the respective endpoints. This simple idea can be leveraged for simplices to find a concise and useful formula for the generalized altitude. To be more precise:  

    
    
	\begin{lemma}\label{lem:altitude_expression}
    Let
		\[\mathcal{V}_{\tau} = \spann\{ v_1-v_0,\dots, v_k-v_0 \}, \qquad\mathcal{V}_{\tau^*} = \spann\{v_{k+1} - v_n, \dots,v_{n-1}-v_n\},\] 
        be vector spaces parallel to the affine hulls of $\tau$ and $\tau^*$, respectively, and write $\mathcal{L}= (\mathcal{V}_{\tau}+\mathcal{V}_{\tau*})^\perp\subseteq\R^n$ for their (one-dimensional) orthogonal complement in $\R^n$.
		
        With this notation we have that, for any non-zero vector $w\in\mathcal{L}$, the altitude of~$\tau$ and~$\tau^*$ satisfies
        \begin{align}\label{eq:altitude_angle}
            \mathrm{alt}(\tau,\tau^*) = \abs{\scproduct{v_0-v_n}{\frac{w}{\norm{w}}}}.
        \end{align}
	\end{lemma}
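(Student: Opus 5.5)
First we check that $\mathcal{L}$ is indeed one-dimensional. Since $\sigma$ is non-degenerate, the $n$ vectors $v_1-v_0,\dots,v_k-v_0$ and $v_{k+1}-v_n,\dots,v_{n-1}-v_n$ are linearly independent. The second family has $n-k-1$ members, so together there are $k+(n-k-1)=n-1$ of them. To see independence, note that their span together with $v_0-v_n$ equals the span of all differences $v_i-v_n$, which is $\R^n$. Hence $\mathcal{V}_\tau+\mathcal{V}_{\tau^*}$ is a hyperplane $H$ with $\mathcal{V}_\tau\cap\mathcal{V}_{\tau^*}=\{0\}$, and $\mathcal{L}=H^\perp$ is a line. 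The same argument also shows that $v_0-v_n\notin H$.

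The plan is to parametrize the two affine hulls and split $v_0-v_n$ orthogonally. Any $u\in\mathcal{F}_\tau$ and $u^*\in\mathcal{F}_{\tau^*}$ can be written as $u=v_0+a$ and $u^*=v_n+b$ with $a\in\mathcal{V}_\tau$ and $b\in\mathcal{V}_{\tau^*}$. Then
\[u-u^* = (v_0-v_n) + (a-b),\qquad a-b\in H,\]
and as $(a,b)$ ranges over $\mathcal{V}_\tau\times\mathcal{V}_{\tau^*}$, the difference $a-b$ ranges over all of $H$. Let $\hat w = w/\norm{w}$ and decompose
\[v_0-v_n = \scproduct{v_0-v_n}{\hat w}\,\hat w + h_0,\qquad h_0\in H.\]

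Next we minimize. By Pythagoras,
\[\norm{u-u^*}^2 = \scproduct{v_0-v_n}{\hat w}^2 + \norm{h_0+a-b}^2 .\]
This is at least $\scproduct{v_0-v_n}{\hat w}^2$. Equality is attained by choosing $a-b=-h_0$, which is possible because $a-b$ ranges over all of $H$. The minimum therefore exists and equals $\abs{\scproduct{v_0-v_n}{\hat w}}$, which is \eqref{eq:altitude_angle}. The value does not depend on the choice of $w$, since replacing $w$ by a nonzero multiple only changes the sign of $\hat w$.

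The only step needing real care is the dimension count that makes $\mathcal{L}$ a line. That count is exactly what guarantees that every vector of $H$ arises as $a-b$, so that the infimum is actually attained. It rests entirely on the non-degeneracy of $\sigma$. Everything else is the orthogonal-projection argument above, which formalizes the idea stated before the lemma: the minimizing segment is the component of $v_0-v_n$ normal to both faces.
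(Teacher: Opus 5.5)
Your proof is correct and is essentially the paper's argument: you split $v_0-v_n$ orthogonally into its $\mathcal{L}$-component and a component in $H=\mathcal{V}_\tau+\mathcal{V}_{\tau^*}$, apply Pythagoras, and cancel the $H$-part (the paper additionally writes that part as $v+v^*$); your explicit basis argument usefully makes rigorous the direct-sum splitting that the paper justifies only by ``pairwise trivial intersections''. Two minor slips: the listed family has $n-1$ vectors, not $n$, and the dimension count is what ensures $H^\perp=\spann\{\hat w\}$ (so your decomposition of $v_0-v_n$ exists), while every vector of $H$ arises as $a-b$ simply by the definition of $H$.
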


	\begin{proof}
		The affine hulls $\mathcal{F}_{\tau}$ and $\mathcal{F}_{\tau^*}$ of, respectively, $\tau$ and $\tau^*$, can be written as
		
	\[\mathcal{F}_{\tau} = v_0 +\mathcal{V}_\tau, \qquad\text{and}\qquad \mathcal{F}_{\tau^*} = v_n+\mathcal{V}_{\tau*}.\]
	
	The altitude~$\mathrm{alt}(\tau,\tau^*)$ is the minimum distance between them,
		\[\mathrm{alt}(\tau,\tau^*) = \min_{u\in\mathcal{F}_{\tau}, u^*\in\mathcal{F}_{\tau^*}} \norm{u-u^*}= \min_{u\in\mathcal{V}_{\tau}, u^*\in\mathcal{V}_{\tau^*}}\norm{v_0 - v_n + u+u^*}.\]
		Since the vector spaces~$\mathcal{V}_{\tau}, \mathcal{V}_{\tau^*}$, and $\mathcal{L}$ have pairwise trivial intersections, there exist unique vectors~$v\in \mathcal{V}_{\tau}, v^*\in\mathcal{V}_{\tau^*}$, and~$\bar{w}\in\mathcal{L}$, such that
		\[v_0-v_n = v+v^*+\bar{w}.\]
    With this,
    \begin{align*}
        \norm{v_0 - v_n + u+u^*}^2&=\norm{\bar{w} + (v+u)+(v^*+u^*)}^2\\
        &=\norm{\bar{w}}^2+\norm{(v+u)+(v^*+u^*)}^2,
    \end{align*}
    and thus, the minimum $\min_{u\in\mathcal{V}_{\tau}, u^*\in\mathcal{V}_{\tau^*}}\norm{v_0 - v_n + u+u^*}$ is attained for~$u=-v$ and~$u^*=-v^*$:
    \begin{align*}
        \mathrm{alt}(\tau,\tau^*) &= \min_{u\in\mathcal{V}_{\tau}, u^*\in\mathcal{V}_{\tau^*}}\norm{v_0 - v_n + u+u^*}\\
        &=\sqrt{\norm{\bar{w}}^2+\min_{u\in\mathcal{V}_{\tau}, u^*\in\mathcal{V}_{\tau^*}}\norm{(v+u)+(v^*+u^*)}^2}\\
        &=\norm{\bar{w}}\\
        &=\tfrac{1}{\norm{\bar{w}}} \scproduct{\bar{w}}{\bar{w}} =\tfrac{1}{\norm{\bar{w}}} \scproduct{v_0-v_n}{\bar{w}}.
    \end{align*}
        
		
		Any other non-zero vector $w\in \mathcal{L}$ is a scalar multiple of $\bar{w}$, which means that $\tfrac{w}{\norm{w}}=\pm \tfrac{\bar{w}}{\norm{\bar{w}}}$. Thus,
		\[\mathrm{alt}(\tau,\tau^*) =\tfrac{1}{\norm{w}} \abs{\scproduct{v_0-v_n}{w}}.\]
	\end{proof}

    \section{Relation with (generalized) cross product}
We first review some standard tools from multilinear algebra, adopting the terminology of Duistermaat and Kolk~\cite[Sec. 5.3, Remark on linear algebra]{duistermaat2004multi}. 
\begin{definition}
    For linearly independent vectors $x_1,\dots,x_{n-1}\in\R^n$, the \emph{generalized cross product}\footnote{This is also sometimes called the external product.} \newline$x_1\times\dots\times x_{n-1}$ is the unique vector such that for any vector $b\in\R^n$,
\begin{align}\label{eq:determinant}
    \left\langle b,  x_1\times\dots\times x_{n-1} \right\rangle = \det (b, x_1,\dots,x_{n-1}).
\end{align}
\end{definition}

\begin{remark}
    The generalized cross product can also be seen as the Hodge dual of the exterior product:
    \[x_1\times\dots\times x_{n-1} = \star (x_1\wedge\dots\wedge x_{n-1}).\]
\end{remark}
Equality~\ref{eq:determinant} implies that the cross product is perpendicular to all the $x_i$'s: For any $i=1,\dots,n$,
\[\left\langle x_i,  x_1\times\dots\times x_{n-1} \right\rangle = \det (x_i,\: x_1,\dots,x_i,\dots,x_{n-1})= 0.\]
The cross product thus forms a basis of the one-dimensional vector space $span \{x_1,\dots,x_{n-1}\}^\perp$.

In addition, the norm of the external product has an important geometric meaning. It is
the volume of the parallelopiped spanned by $x_1,\dots,x_{n-1}$, or, in other words, the square root of the Gram determinant of the vectors $x_1,\dots,x_{n-1}$. To be more concrete, by denoting the parallelopiped by $P(x_1,\dots,x_{n-1})$,
\[P(x_1,\dots,x_{n-1})
=
\left\{
\sum_{i=1}^{n-1} t_i x_i \;\middle|\; 0 \le t_i \le 1
\right\},\]
and the Gram matrix by $G(x_1,\dots,x_{n-1})=(\scproduct{x_i}{x_j})_{i,j=1,\dots,n-1}$, we obtain the equality
\begin{align}\label{eq:external_product_volume}
    \norm{x_1\times\dots\times x_{n-1}}=\vol_{n-1}P(x_1,\dots,x_{n-1}) = \sqrt{\det G(x_1,\dots,x_{n-1})}.
\end{align}

We connect these tools to generalized altitudes by inserting the vectors $v_1-v_0,\dots, v_k-v_0,v_{k+1}-v_n, \dots,v_{n-1}-v_n$ from Lemma~\ref{lem:altitude_expression} in the place of the vectors $x_1,\dots,x_{n-1}$.  We obtain the following corollary:

\begin{corollary}\label{cor:altitude_Gram_determinant}
Let
\[\hat{v} = (v_1-v_0)\times\dots\times (v_k-v_0)\times (v_{k+1}-v_n)\times \dots\times (v_{n-1}-v_n).\]
Then the altitude between $\tau$ and $\tau^*$ satisfies
\begin{align*}
    &\alt(\tau,\tau^*)=\frac{\abs{\scproduct{v_0-v_n}{\hat{v}}}}{\vol_{n-1}P(v_1-v_0,\dots, v_k-v_0,v_{k+1}-v_n, \dots,v_{n-1}-v_n)}.
\end{align*}
\end{corollary}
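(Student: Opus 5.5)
The corollary follows by applying Lemma~\ref{lem:altitude_expression} with the specific choice $w=\hat{v}$, then rewriting $\norm{\hat{v}}$ using \eqref{eq:external_product_volume}.

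First, I check that $\hat v$ is an admissible choice of $w$, that is, a nonzero element of $\mathcal{L}$. The $n-1$ vectors $v_1-v_0,\dots,v_k-v_0,v_{k+1}-v_n,\dots,v_{n-1}-v_n$ are linearly independent because $\sigma$ is non-degenerate. To see this, suppose a linear combination of them vanishes. Rewriting it in terms of the $n$ vectors $v_i-v_0$, and using $v_j-v_n=(v_j-v_0)-(v_n-v_0)$, gives a combination in which $v_n-v_0$ has coefficient minus the sum of the coefficients of the $v_j-v_n$. Since the vectors $v_i-v_0$ are independent, all coefficients must vanish. Hence the generalized cross product $\hat v$ is defined. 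By \eqref{eq:determinant}, it is orthogonal to each of the $n-1$ vectors, so $\hat v\in(\mathcal{V}_\tau+\mathcal{V}_{\tau^*})^\perp=\mathcal{L}$. It is nonzero because $\det(b,x_1,\dots,x_{n-1})\neq0$ for any $b$ completing the independent family $x_1,\dots,x_{n-1}$ to a basis. This nonvanishing is the only point that needs any care, and it is exactly where non-degeneracy is used.

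Next, Lemma~\ref{lem:altitude_expression} with $w=\hat v$ gives
\[
\alt(\tau,\tau^*)=\frac{\abs{\scproduct{v_0-v_n}{\hat v}}}{\norm{\hat v}}.
\]
Finally, \eqref{eq:external_product_volume} states that $\norm{\hat v}=\vol_{n-1}P(v_1-v_0,\dots,v_k-v_0,v_{k+1}-v_n,\dots,v_{n-1}-v_n)$. Substituting this into the denominator yields the claimed formula. As a by-product, the numerator can also be written as $\abs{\det(v_0-v_n,v_1-v_0,\dots,v_{n-1}-v_n)}$ by \eqref{eq:determinant}.
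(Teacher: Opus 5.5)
Your proposal is correct and matches the paper's argument: substitute $w=\hat v$ into Lemma~\ref{lem:altitude_expression}, then replace $\norm{\hat v}$ by the parallelepiped volume using \eqref{eq:external_product_volume}. Your explicit check that the $n-1$ vectors are linearly independent, so that $\hat v$ is defined, nonzero and spans $\mathcal{L}$, supplies a detail the paper leaves implicit.
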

\begin{proof}
    Once we realize that the vector $\hat{v}$ is per definition perpendicular to the vectors $v_1-v_0,\dots, v_k-v_0,v_{k+1}-v_n, \dots,v_{n-1}-v_n$, the statement follows directly from the combination of Eq.~\ref{eq:altitude_angle} and~\ref{eq:external_product_volume}.
\end{proof}

\section{Approximation of altitudes by the height}
The \emph{height} of~$\sigma$ is the minimum altitude of its vertices,
	\[\height(\sigma) = \min_{i\in\{0,\dots,n\}}\alt(v_i, \sigma\backslash v_i).\]

\begin{lemma}\label{CL:altitute_bounded_by_height}
		The altitude of every face~$\tau$ of the simplex~$\sigma$ is lower bounded by the height of~$\sigma$:
        \begin{align}\label{eq:bound_with_height}
            \mathrm{alt}(\tau,\tau^*)\geq 2\:\frac{\height\left(\sigma\right)}{n+1}.
        \end{align}
	\end{lemma}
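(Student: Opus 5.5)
The plan is to use barycentric coordinates, viewed as affine functions on $\R^n$, instead of working with the closest points directly. For each vertex $v_i$ let $\lambda_i:\R^n\to\R$ be the $i$-th barycentric coordinate with respect to $\sigma$. This is the unique affine function with $\lambda_i(v_j)=\delta_{ij}$. Its gradient $\nabla\lambda_i$ is a constant vector. Since $\lambda_i$ vanishes on the affine hull of the facet $\sigma\setminus v_i$, the gradient is orthogonal to that facet. Since $\lambda_i$ increases by $1$ from that facet to $v_i$, we get
\[
\norm{\nabla\lambda_i}=\frac{1}{\alt(v_i,\sigma\setminus v_i)}\le \frac{1}{\height(\sigma)}.
\]
This is the only input needed about the height. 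It is also essentially the case $k=0$ of Lemma~\ref{lem:altitude_expression}: the unit normal $w/\norm{w}$ is proportional to $\nabla\lambda_i$.

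Next, I consider the affine function $f=\sum_{i=0}^{k}\lambda_i$. Because $\sum_{i=0}^n\lambda_i\equiv 1$, we also have $f=1-\sum_{i=k+1}^{n}\lambda_i$. The function $f$ is identically $1$ on $\mathcal{F}_\tau$, since the barycentric coordinates of indices greater than $k$ vanish there. It is identically $0$ on $\mathcal{F}_{\tau^*}$. I take points $u\in\mathcal{F}_\tau$ and $u^*\in\mathcal{F}_{\tau^*}$ realizing $\alt(\tau,\tau^*)$, which exist by the proof of Lemma~\ref{lem:altitude_expression}. Since $f$ is affine, Cauchy--Schwarz gives
\[
1=\abs{f(u)-f(u^*)}=\abs{\scproduct{\nabla f}{u-u^*}}\le \norm{\nabla f}\,\alt(\tau,\tau^*).
\]

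It remains to bound $\norm{\nabla f}$, using whichever of the two expressions for $f$ has fewer terms. By the triangle inequality and the gradient bound, $\norm{\nabla f}\le (k+1)/\height(\sigma)$ from the first expression and $\norm{\nabla f}\le (n-k)/\height(\sigma)$ from the second. Therefore
\[
\alt(\tau,\tau^*)\ge \frac{\height(\sigma)}{\min(k+1,\,n-k)}.
\]
Since $(k+1)+(n-k)=n+1$, the minimum is at most $(n+1)/2$, which yields \eqref{eq:bound_with_height}.

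The only step needing care is the gradient identity for $\lambda_i$; it is standard. Beyond that, the key idea is the symmetric choice of the two expressions for $f$, which produces the factor $2$. I would also note that the statement implicitly requires both $\tau$ and $\tau^*$ to be nonempty, that is $0\le k\le n-1$.
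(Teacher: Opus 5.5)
Your proof is correct, and it takes a genuinely different route from the paper.

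The paper argues geometrically. It observes that $\alt(\tau,\tau^*)$ is the width of the slab between the parallel hyperplanes $v_0+\mathcal{V}_\tau+\mathcal{V}_{\tau^*}$ and $v_n+\mathcal{V}_\tau+\mathcal{V}_{\tau^*}$. This slab contains $\sigma$ and hence its inscribed ball, so $\alt(\tau,\tau^*)\geq 2r$. Finally, $r\geq \height(\sigma)/(n+1)$, because the barycentre lies at distance $\alt(v_i,\sigma\setminus v_i)/(n+1)$ from the $i$-th facet.

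You instead separate the two affine hulls with the affine function $f=\sum_{i\leq k}\lambda_i$, and bound $\norm{\nabla f}$ by applying the triangle inequality to the shorter of its two expressions. Your inequality is in fact an equality: $f$ is constant on both hulls, so $\nabla f\in\mathcal{L}$, and Lemma~\ref{lem:altitude_expression} with $w=\nabla f$ gives $\alt(\tau,\tau^*)=1/\norm{\nabla f}$.

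Your route buys a sharper bound, $\height(\sigma)/\min(k+1,n-k)$. For vertex--facet pairs ($k=0$ or $k=n-1$) this gives the natural estimate $\alt(\tau,\tau^*)\geq\height(\sigma)$ rather than $2\height(\sigma)/(n+1)$. It coincides with \eqref{eq:bound_with_height} only when $k+1=n-k$.

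Your argument also recovers the paper's intermediate step. Since $\sum_{i=0}^{n}\norm{\nabla\lambda_i}=\sum_{i=0}^{n}1/\alt(v_i,\sigma\setminus v_i)=1/r$, the smaller of the partial sums over $i\leq k$ and over $i>k$ is at most $1/(2r)$. So, before you replace each vertex altitude by $\height(\sigma)$, your argument already yields $\alt(\tau,\tau^*)\geq 2r$.

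In exchange, the paper's argument is coordinate-free and pictorial. It rests only on the fact that a slab containing a convex body is at least as wide as the diameter of the body's inscribed ball.
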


\begin{proof}
As before, let \[\mathcal{V}_{\tau} = \spann\{ v_1-v_0,\dots, v_k-v_0 \} \quad\text{and}\quad\mathcal{V}_{\tau^*} = \spann\{v_{k+1} - v_n, \dots,v_{n-1}-v_n\}.\] 
	Since~$\mathrm{alt}(\tau,\tau^*)$ is the distance between~$\mathcal{F}_\tau = v_0+\mathcal{V}_\tau$ and $\mathcal{F}_{\tau^*} = v_n+\mathcal{V}_{\tau^*}$, it is also the distance between the parallel hyperplanes
	\[\Pi_0 :=v_0 + \mathcal{V}_{\tau}+\mathcal{V}_{\tau^*} \qquad\text{ and }\qquad \Pi_n:= v_n + \mathcal{V}_{\tau}+\mathcal{V}_{\tau^*}. \]
	The altitude~$\mathrm{alt}(\tau,\tau^*)$ is thus the `thickness' of the space between the hyperplanes~$\Pi_0$ and~$\Pi_n$, or in other words, their convex hull $\conv{\{\Pi_0,\Pi_n\}}$. 
    At the same time, the hyperplane~$\Pi_0$ contains the vertices~$v_0,\dots,v_k$, and the hyperplane~$\Pi_n$ contains the vertices~$v_{k+1},\dots,v_n$. Thus, the convex hull $\conv{\{\Pi_0,\Pi_n\}}$ contains the whole simplex~$\sigma$,
	\[\conv{\{\Pi_0,\Pi_n\}}\supseteq \sigma.\]
	
	\begin{figure}[h!]
		\centering
		\includegraphics[width=0.8\textwidth]{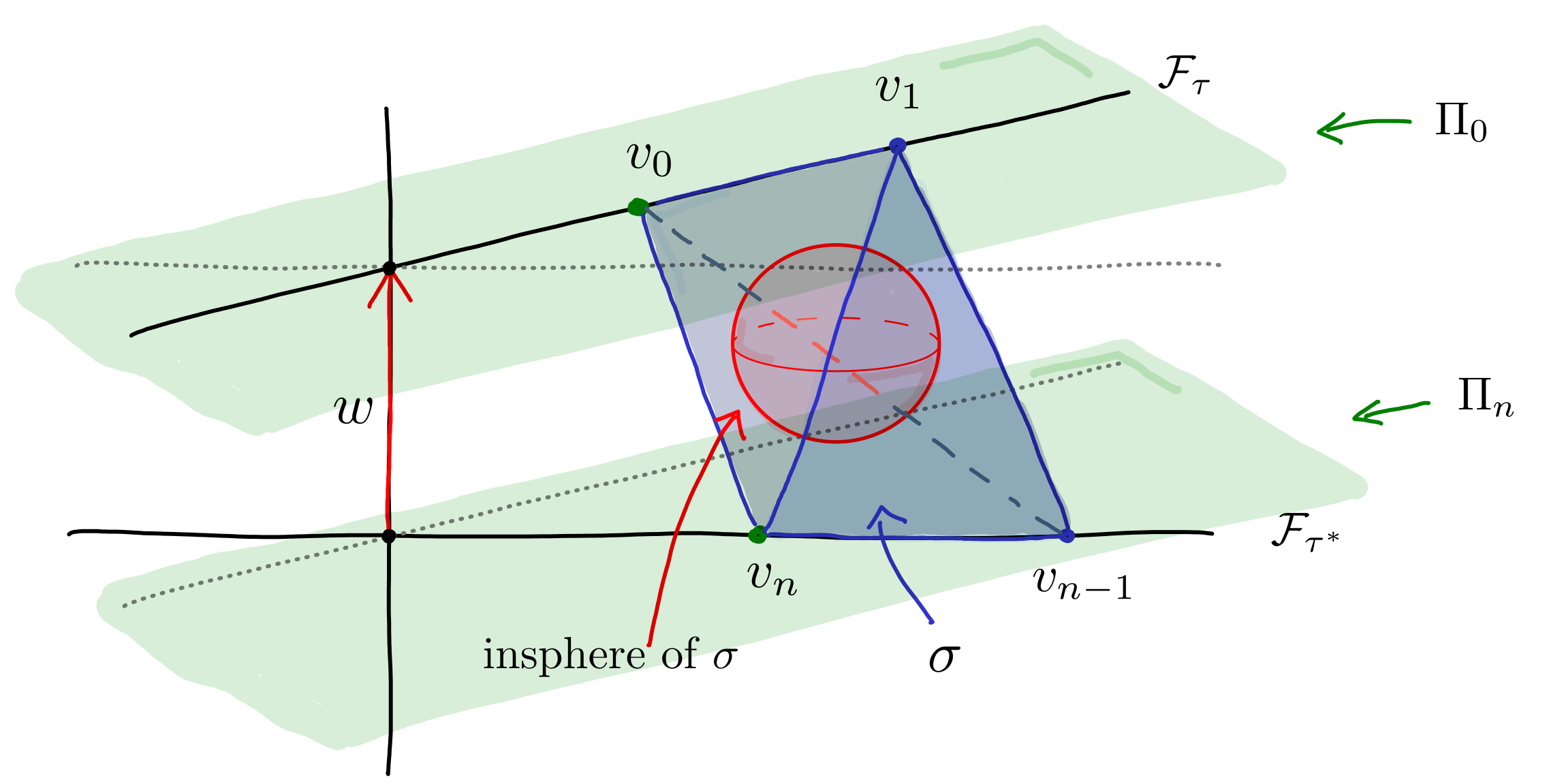}
		\caption{The convex hull of~$\Pi_0$ and $\Pi_n$ contains~$\sigma$.}
		\label{fig:division_subspaces2}
	\end{figure}
	
	In particular,~$\conv{\{\Pi_0,\Pi_n\}}$ also contains the insphere of~$\sigma$ (as illustrated in Figure~\ref{fig:division_subspaces2}). Thus, $\mathrm{alt}(\tau,\tau^*)$ is lower bounded by the diameter of the insphere of~$\sigma.$
	
	The radius of the insphere is further lower bounded by $\frac{\height\left(\sigma\right)}{n+1}.$
	Indeed, the distance of the barycentre to each of the faces of the simplex is lower bounded by $\frac{\height\left(\sigma\right)}{n+1}$. Thus, the sphere with centre at the barycentre and radius~$\frac{\height\left(\sigma\right)}{n+1}$ is contained in the interior of the simplex. Since the insphere is the largest sphere contained in the interior of a simplex, its radius must be larger than~$\frac{\height\left(\sigma\right)}{n+1}$.
\end{proof}

Given that, by definition,
\[
\alt(v_i,\sigma\backslash v_i)\geq \height(\sigma)
\]
for every vertex \(v_i\in\sigma\), it is natural to ask to what extent the constant \(\tfrac{2}{n+1}\) in bound~\eqref{eq:bound_with_height} can be improved. More concretely, what is the best constant $c$, such that 
\[\alt(\tau, \sigma\backslash \tau)\geq c\cdot\height(\sigma)\]
for all faces $\tau\subseteq \sigma$. The following example shows that this constant has to be strictly smaller than 1.

For $h>0$, consider the tetrahedron $\sigma$ with vertices
\begin{align*}
    v_0&= (0,1,0)^t & v_1&= (0,-1,0)^t & v_2&= (1 ,0,h)^t & v_3&= (-1,0,h)^t.
\end{align*}

\begin{figure}[h!]
		\centering
		\includegraphics[width=0.6\textwidth]{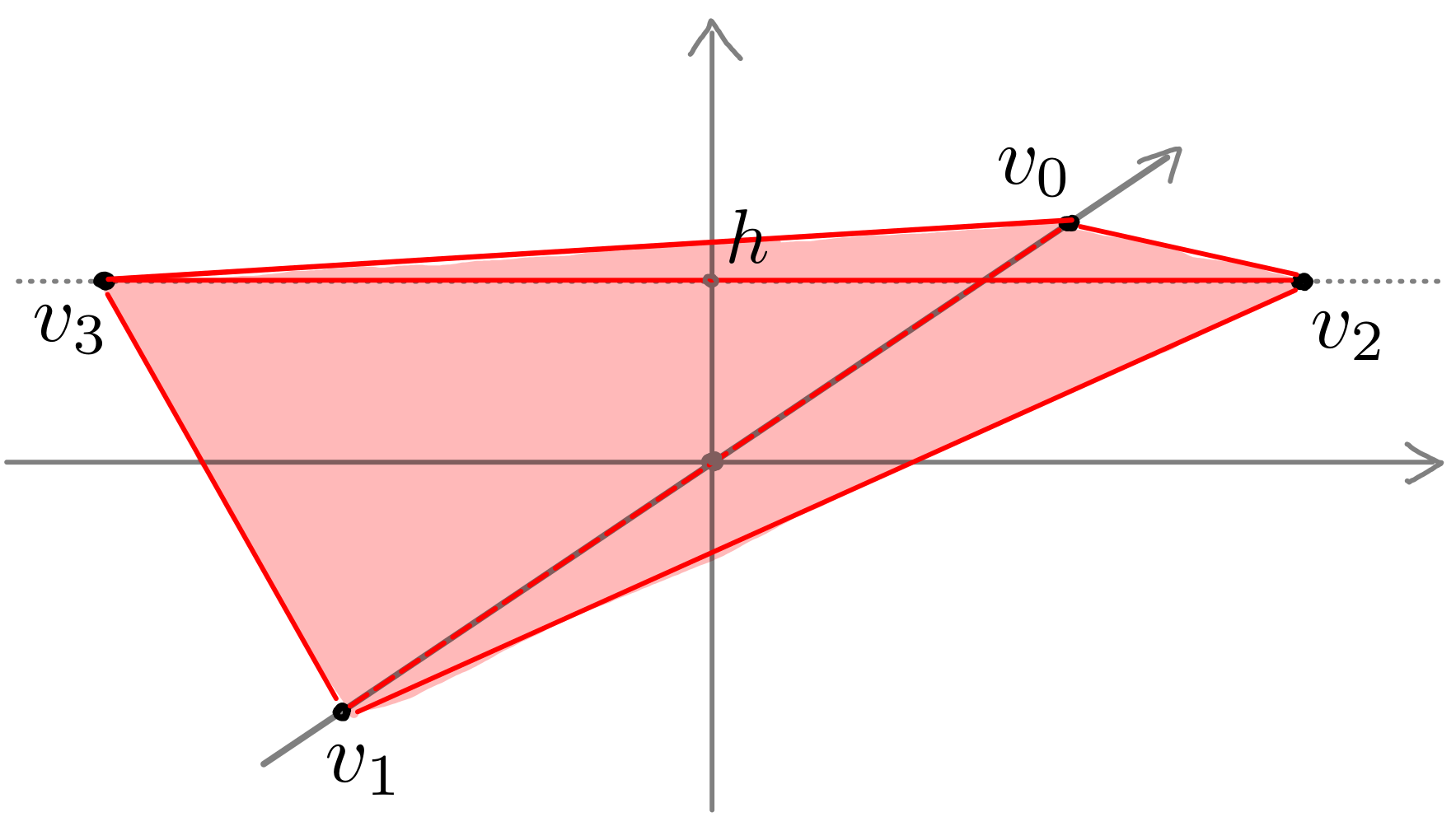}
		\caption{Illustration of the tetrahedron $\sigma$.}
		\label{fig:tetrahedron}
	\end{figure}

We depict $\sigma$ in Figure~\ref{fig:tetrahedron}. Since all four faces of $\sigma$ are congruent, the altitudes at all four vertices $v_i$ are equal,
\[
\alt(v_i, \sigma\backslash v_i)  = \height(\sigma) = \frac{2h}{\sqrt{1+h^2}}.
\]
The altitudes of the three pairs of opposite edges equal
\[
\alt(\{v_0,v_1\}, \{v_2,v_3\})  = \alt(\{v_0,v_2\}, \{v_1,v_3\}) = \sqrt{2}, \quad\alt(\{v_0,v_3\}, \{v_1,v_2\}) =h.
\]
Thus, for $0<h<\sqrt{3}$, we have that $\alt(\{v_0,v_3\}, \{v_1,v_2\})< \height(\sigma) $.

\bibliography{refs}

\end{document}